\newtheorem{thm}{Theorem}
\newtheorem{lem}{Lemma}
\newtheorem{mydef}{Definition}
\newtheorem{ex}{Example}
\newtheorem{asm}{Assumption}
\newtheorem{prop}{Proposition}
\title{\LARGE \bf Optimal Budget Allocation in Social Networks: Quality or Seeding?
}
\author{Arastoo Fazeli$^\dagger$  \quad Amir Ajorlou $^\dagger$  \quad Ali Jadbabaie$^\dagger$
\thanks{\noindent$^\dagger$Department of Electrical and Systems Engineering and GRASP Laboratory at University of Pennsylvania. {\tt\small arastoo@seas.upenn.edu}, {\tt\small ajorlou@seas.upenn.edu} and {\tt\small jadbabai@seas.upenn.edu.} This research was supported by ARO MURI W911NF-12-1-0509 and AFOSR Complex Networks Program.}
}
\begin{document}

\maketitle
\thispagestyle{empty}
\pagestyle{empty}


\begin{abstract}

In this paper, we study a strategic model of marketing and product consumption in social networks. We consider two competing firms in a market providing two substitutable products with preset qualities. Agents choose their consumptions following a myopic best response dynamics which results in a local, linear update for the consumptions. At some point in time, firms receive a limited budget which they can use to trigger a larger consumption of their products in the network. Firms have to decide between marginally improving the quality of their products and giving free offers to a chosen set of agents in the network in order to better facilitate spreading their products. We derive a simple threshold rule for the optimal allocation of the budget and describe the resulting Nash equilibrium. It is shown that the optimal allocation of the budget depends on the entire distribution of centralities in the network, quality of products and the model parameters. In particular, we show that in a graph with a higher number of agents with centralities above a certain threshold, firms spend more budget on seeding in the optimal allocation. Furthermore, if seeding budget is nonzero for a balanced graph, it will also be nonzero for any other graph, and if seeding budget is zero for a star graph, it will be zero for any other graph too. We also show that firms allocate more budget to quality improvement when their qualities are close, in order to distance themselves from the rival firm. However, as the gap between qualities widens, competition in qualities becomes less effective and firms spend more budget on seeding.

\end{abstract}


\section{Introduction}

Many recent studies have documented the role of social networks in individual purchasing decisions
\cite{feick1987market,reingen1984brand,godes2004using}. More data from online social networks and advances in information technologies have drawn the attention of firms to exploit this information for their marketing goals. As a result, firms have become more interested in models of influence spread in social networks in order to improve their marketing strategies. In particular, considering the relationship between people in social networks and their rational choices, many retailers are interested to know how to use the information about the dynamics of the spread in order to maximize their product consumption and achieve the most profit in a competitive market.

A main feature of product consumption in these settings is what is often called the ``network effect'' or positive externality. For such products, consumption of each agent incentivizes the neighboring agents to consume more as well, as the consumption decisions between agents and their neighbors are strategic complements of each other.
There are diverse sets of examples for such products or services. New technologies and innovations (e.g., cell phones), network goods and services (e.g., fax machines, email accounts), online games (e.g., Warcraft), social network web sites (e.g., Facebook, Twitter) and online dating services (e.g., OkCupid) are among many examples in which consuming from a common product or service is more preferable for people. Also, a main property of many products is substitution. A substitute product is a product or service that satisfies the need of a consumer that another product or service fulfills (e.g. Pepsi and Coke or email and fax). In all these examples, firms might be interested to take advantage of the social network among people and the positive externality of their products and services to trigger a larger consumption of their products. Therefore, it might be important for firms to know how to shape their strategies in designing their products and offering them to a set of people in order to promote their products intelligently, and eventually achieving a larger share of the market.

In this paper, we study strategic competition between two firms trying to maximize their product consumption. Firms simultaneously allocate their fixed budgets between seeding a set of costumers embedded in a social network and improving the quality of their products. The consumption of each agent is the result of its myopic best response to the previous actions of its peers in the network. Therefore, a firm should provide enough incentives for spread of its product through the payoff that agents receive by consuming it. For this purpose and considering their budgets, firms should strategically design their products and know how to initially seed the network.

We model the above problem as a fixed-sum game between firms, where each firm tries to maximize discounted sum of its product consumption (i.e. market share) over time, considering its fixed budget.
We derive a simple rule for optimal allocation of the budget between improving the quality and new seeding which depends on the network structure, quality of products, and parameters of the model. It is shown that the optimal allocation of the budget depends on the entire centrality distribution of the graph. Specially, we show that maximum seeding occurs in a graph with maximum number of agents with centralities above a certain threshold.
Also, the difference in qualities of firms plays an important role in the optimal allocation of the budget. In particular, we show that as the gap between the qualities of the products widens, the firms allocate more budget to seeding.

It is worthwhile to note that the problem of influence and spread in networks has been extensively studied in the past few years \cite{ballester2006s,bharathi2007competitive,galeotti2009influencing,kempe2003maximizing,kempe2005influential,chasparis2010control,vetta2002nash}. Also, diffusion of new behaviors and strategies through local coordination games has been an active field of research
\cite{ellison1993learning,kandori1993learning,harsanyi1988general,young1993evolution,young2001individual,young2002diffusion,montanari2010spread,kleinberg2007cascading}. Goyal and Kearns proposed a game theoretic model of product adoption in \cite{goyal2012competitive}. They computed upper bounds of the price of anarchy and showed how network structure may amplify the initial budget differences. Similarly, in \cite{bimpikiscompeting} Bimpikis, Ozdaglar and Yildiz proposed a game theoretic model of competition between firms which can target their marketing budgets to individuals embedded in a social network. They provided conditions under which it is optimal for the firms to asymmetrically target a subset of the individuals. Also, Chasparis and Shamma assumed a dynamical model of preferences in \cite{chasparis2010control} and computed optimal policies for finite and infinite horizon where endogenous network influences, competition between two firms and uncertainties in the network model were studied.
The main contribution of our work is to explicitly study the tradeoff between investing on improving the quality of a product and initial seeding in a social network. Our model is similar to the model proposed in \cite{fazeli2012duopoly}, however, instead of pricing strategy in \cite{fazeli2012duopoly}, the notion of quality is introduced and the tradeoff between quality improvement and seeding is studied. Also, our model is tractable and allows us to characterize the exact product consumption at each time, instead of lower and upper bounds provided in \cite{fazeli2012game,fazeli2012targeted}.

The rest of this paper is organized as follows: In Section~\ref{sec2}, we introduce our model and update dynamics for agents applying the myopic best response. In Section~\ref{sec3}, we study the game played among the firms. Optimal budget allocation strategy is presented in Section~\ref{sec4}. Finally, in Section~\ref{sec5}, we conclude the paper.


\section{The Spread Dynamics} \label{sec2}

There are $n$ agents $V=\{1,\ldots,n\}$ in a social network. The relationship among agents is represented by a directed graph $\mathcal{G}=(V,E)$ in which agents $i, j \in V$ are neighbors if $(i,j) \in E$. The weighted adjacency matrix of the graph $\mathcal{G}$ is denoted by a row stochastic matrix $G$ where the $ij$-th entry of $G$, denoted by $g_{ij}$, represents the strength of the influence of agent $j$ on $i$. For diagonal elements of matrix $G$, we have $g_{ii}=0$ for all agents $i \in V$. We assume that there are two competing firms $a$ and $b$ producing product $a$ and $b$. Each agent has  one unit demand which can be supplied by either of the firms.
We define the variable $0 \leq x_i(t) \leq 1$ and $0 \leq 1- x_i(t) \leq 1$ as the consumption of the product $a$ and $b$ by agent $i$ at time $t$.

Denote by $q_a > 0$ and $q_b >0 $ the quality of product $a$ and $b$ respectively. The values of $q_a$ and $q_b$ can be interpreted as the payoff that any two agents $i$ and $j$ would achieve if they both consume the same product. In other words, we can assume $q_a$ and $q_b$ are payoffs of the following game
\begin{center}
\begin{tabular}{|l|c|r|}
  \hline
   & $x_j$ & $1-x_j$ \\ \hline
  $x_i$ & $q_a x_i x_j$ & $0$ \\ \hline
  $1-x_i$ & $0$ & $q_b (1-x_i)(1-x_j)$ \\
  \hline
\end{tabular}
\end{center}
Since agents benefit from the same action of their neighbors, this game could be thought of as a local coordination game. From the above table it follows easily that the payoff of agents $i$ and $j$ from their interaction is
\begin{equation*}
u_{ij}(x_i,x_j)=q_a x_i x_j + q_b (1-x_i)(1-x_j).
\end{equation*}
We also assume that each agent benefits from taking action $x_i$ irrespective of actions taken by its neighbors. We assume the isolation payoff of consuming $x_i$ and $1-x_i$ from product $a$ and $b$ is represented by the following quadratic form functions
\begin{equation*}
u_{ii}^{a} = q_a (x_i - x^2_i), \qquad  u_{ii}^{b} = q_b [(1-x_i) - (1-x_i)^2].
\end{equation*}
This forms of payoff indicates that a product with higher quality has a higher isolation payoff. The total isolation payoff of agent $i$ can be written as
\begin{equation*}
u_{ii}(x_i) = \{ q_a (x_i - x^2_i)\} + \{q_b [(1-x_i) - (1-x_i)^2]\}.
\end{equation*}
The above isolation payoff indicates that the isolation utility of agent $i$ disappears when $x_i=0$ or $x_i=1$, i.e. the agent consumes only one of the product $a$ or $b$. Assuming quadratic form function for the isolation payoff not only makes the analysis more tractable, but also is a good second order approximation for the general class of concave payoff functions. By changing the variables $x_i = \frac{1}{2} + y_i$ we get
\begin{equation*}
\begin{split}
&u_{ij}(y_i,y_j)=q_a (\frac{1}{2} + y_i) (\frac{1}{2} + y_j) + q_b (\frac{1}{2} - y_i)(\frac{1}{2} - y_j), \\
&u_{ii}(y_i) = (q_a + q_b)(\frac{1}{4} - y^2_i).
\end{split}
\end{equation*}
Therefore, the total utility of agent $i$ from taking action $y_i$ is given by
\begin{equation} \label{total utility}
\begin{split}
&U_i(y_i, \vec{y}_{-i}) = \alpha \{ (q_a + q_b)(\frac{1}{4} - y^2_i) \} + (1-\alpha) \\
&\{q_a \sum_{j=1}^n g_{ij} (\frac{1}{2} + y_i) (\frac{1}{2} + y_j) + q_b \sum_{j=1}^n g_{ij} (\frac{1}{2} - y_i) (\frac{1}{2} - y_j)\},
\end{split}
\end{equation}
where $0 \leq \alpha \leq 1$ represents the weight that each agent puts on its isolation payoff compared to the payoff from its neighbors. In the above equation $\vec{y}_{-i}$ denotes an action vector of all agents other than agent $i$. From equation \eqref{total utility} we can see that product $a$ and $b$ have a positive externality effect in the network, meaning that the usage level of an agent has a positive impact on the usage level of its neighbors. Therefore, it follows that $q_a$ and $q_b$ in addition to the payoff of a local coordination game, can be interpreted as coefficients of network externality of product $a$ and $b$ respectively.

We assume agents repeatedly apply myopic best response to the actions of their neighbors. This means that each agent, considering its neighbors consumptions at the current period, chooses the amount of the product that maximizes its current payoff, as its consumption for the next period. In other words, consumption of agent $i$ at time $t+1$ is updated as follows
\begin{equation*} \label{utility}
y_i(t+1) = \arg\max_{y_i}\quad U_i(y_i(t), \vec{y}_{-i}(t)).
\end{equation*}
The above equation results in the following update dynamics
\begin{equation*} \label{best response}
y_i(t+1)= (\frac{1-\alpha}{2\alpha})\sum_{j=1}^n g_{ij} y_j(t) + (\frac{(1-\alpha)(q_a-q_b)}{4\alpha(q_a+q_b)})\sum_{j=1}^n g_{ij}.
\end{equation*}
Therefore, the consumption of the product $a$ can be written as the following linear update dynamics form
\begin{equation} \label{update a}
y(t+1)= (\frac{1-\alpha}{2\alpha}) G y(t) + (\frac{(1-\alpha)(q_a-q_b)}{4\alpha(q_a+q_b)}) \mathbf{1}.
\end{equation}
Similarly, for the consumption of the product $b$ we have \\
$1 - x_i(t) = \frac{1}{2} - y_i(t)$.
\begin{asm} \label{asm1}
We have $\frac{1}{2} \leq \alpha$, i.e. the self isolated payoff has higher weight than the payoff from neighbors. This assumption guaranties that $0 \leq x_i(t) \leq 1$ for all $i$ and all $t$ under the update rule \eqref{update a}.
\end{asm}
Using the above assumption \label{asm1} and defining
\begin{equation} \label{definition}
W \triangleq (\frac{1-\alpha}{2\alpha}) G, \quad u_a \triangleq (\frac{(1-\alpha)(q_a-q_b)}{4\alpha(q_a+q_b)}) \mathbf{1},
\end{equation}
equations \eqref{update a} can be written as
\begin{equation} \label{matrix form}
y(t+1)= W y(t) + u_a.
\end{equation}
Equation \eqref{matrix form} can be expanded as
\begin{equation} \label{expanded form}
y(t)= W^t y(0) + \sum_{k=0}^{t-1} W^k u_a.
\end{equation}
Therefore, the consumption of agents depends on the initial preference, i.e. $y(0)$, the quality of product $a$ and $b$, i.e. $q_a$ and $q_b$, and the structure of the network, i.e. the matrix $G$. In the next section we discuss how firms can exploit this information in order to maximize the consumption of their products and also characterize a Nash equilibrium of the game played between two firms.

\section{The Game Between Firms} \label{sec3}

In this section we describe the game between firms where each firm maximizes the spread of its product given a fixed budget. As we mentioned earlier, firms produce their products with some preset quality. At some point in time, say $t=0$, firms receive a fixed budget that they can either invest on improving the quality of their products around the preset value or spend it on new seedings of some agents or both. This new seeding can be viewed as free offer to promote their products in the network.
We define the utility of each firm as the discounted sum of its product consumption over time
\begin{equation*}
\begin{split}
&U_a(q_a,q_b) = \sum_{t=0}^{\infty} \delta^t \mathbf{1}^T (\frac{\mathbf{1}}{2} + y(t)), \\
&U_b(q_a,q_b)= \sum_{t=0}^{\infty} \delta^t \mathbf{1}^T (\frac{\mathbf{1}}{2} - y(t)).
\end{split}
\end{equation*}
As it can be seen from equations above, firms play a fixed-sum game. Using equations \eqref{definition} and \eqref{expanded form} and defining the centrality vector $v$ by $v= (I-\delta W^T)^{-1} \mathbf{1}$
and noting that $\sum v_i = \frac{2\alpha n}{2\alpha- \delta(1-\alpha)}$,
the utilities of firms can be written as
\begin{equation} \label{payoffs}
\begin{split}
& U_a(q_a,q_b) = (\frac{n}{2(1-\delta)}) + v^T y(0) + \lambda (\frac{q_a-q_b}{q_a+q_b}), \\
& U_b(q_a,q_b) = (\frac{n}{2(1-\delta)}) - v^T y(0) - \lambda (\frac{q_a-q_b}{q_a+q_b}),
\end{split}
\end{equation}
where
\begin{equation} \label{lambda}
\lambda = \frac{\delta(1-\alpha) n}{2(1-\delta)(2\alpha- \delta(1-\alpha))}.
\end{equation}
We assume the cost of improving the quality by $\Delta q$ is given by $c_q \Delta q$ where $c_q$ is a large number and the cost of each unit of new seeding is given by $c_s$. Each firm has a limited budget that can spend on either new seeding, i.e. $S_a$ and $S_b$, or enhancing the quality of its product, i.e. $\Delta q_a$ and $\Delta q_b$, or both. In order to have $0 \leq x_i(0) \leq 1$ and $0 \leq 1 - x_i(0) \leq 1$ for all agents $i$, we impose the constraints $\|S_a+y(0)\|_\infty\leq 0.5$ and $\|S_b-y(0)\|_\infty\leq 0.5$. This means that firms can initially seed agents up to their demand capacity. Seeding $S_a$ and $S_b$ will change the initial consumption of products $a$ and $b$ by $S_a-S_b$ and $S_b-S_a$ respectively. From equation \eqref{payoffs} the marginal change in the payoff of firm $a$ and $b$ resulted from the new investment are given by
\begin{equation*}
\begin{split}
&\Delta U_a=v^TS_a - v^TS_b + \frac{2\lambda q_b\Delta q_a}{(q_a+q_b)^2} - \frac{2\lambda q_a\Delta q_b}{(q_a+q_b)^2}, \\
&\Delta U_b=v^TS_b - v^TS_a + \frac{2\lambda q_a\Delta q_b}{(q_a+q_b)^2} - \frac{2\lambda q_b\Delta q_a}{(q_a+q_b)^2}.
\end{split}
\end{equation*}
Each firm maximizes its utility, or equivalently its marginal payoff, given its fixed budget. Since the effect of the action of firm $b$, i.e. $S_b$ and $\Delta q_b$, is decoupled from that of the action of firm $a$ in $\Delta U_a$, thus firm $a$ should solve the following optimization problem
\begin{equation}  \label{firms utility function a}
\begin{split}
& \max_{S_a, \Delta q_a} \quad v^TS_a+\frac{2\lambda q_b\Delta q_a}{(q_a+q_b)^2}, \\
& \text{Subject to} \quad c_s \|S_a\|_1 + c_q \Delta q_a=K_a.
\end{split}
\end{equation}
Similarly, for the firm $b$ we have
\begin{equation}  \label{firms utility function b}
\begin{split}
& \max_{S_b, \Delta q_b} \quad v^TS_b+\frac{2\lambda q_a\Delta q_b}{(q_a+q_b)^2}, \\
& \text{Subject to} \quad c_s \|S_b\|_1 + c_q \Delta q_b=K_b.
\end{split}
\end{equation}
From equations \eqref{firms utility function a} and \eqref{firms utility function b} it can be seen that the optimal strategy of each firm is independent of the action of the other firm. This results in a Nash equilibrium to be simply the pair of the optimal actions of the firms. In the next section we discuss the optimal strategy of each firm and the resulting Nash equilibrium.

\section{Optimal Budget Allocation Strategy} \label{sec4}

In this section we study how firms can maximize the spread of their products by optimally allocating their fixed budget between improving the quality of their products and new seeding.  We also describe a Nash equilibrium of the game played between these two firms described in Section \ref{sec3}. The next theorem describes a simple rule for the optimal allocation of the budget.
\begin{thm}
For firm $a$, it is more profitable to seed agent $j$ rather than enhancing the quality of its product if $v_j > v_c^a$ where
\begin{equation} \label{condition}
v_c^a \triangleq (2\lambda) (\frac{c_s}{c_q}) (\frac{q_b}{(q_a+q_b)^2}).
\end{equation}
Similarly, for firm $b$, it is more profitable to seed agent $j$ rather than enhancing the quality of its product if $v_j > v_c^b$ where
\begin{equation} \label{condition2}
v_c^b \triangleq (2\lambda) (\frac{c_s}{c_q}) (\frac{q_a}{(q_a+q_b)^2}).
\end{equation}
Moreover, any pair of the optimal strategies of the firms described by the above threshold rules describes a Nash equilibrium.
\end{thm}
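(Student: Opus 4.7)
The plan is to treat firm $a$'s decision problem \eqref{firms utility function a} as a budget-constrained linear optimization and argue by a ``marginal return per dollar'' exchange. First I would observe that both the objective and the constraint are linear in the decision variables $(S_a,\Delta q_a)$, so up to the box constraints $\|S_a+y(0)\|_\infty\leq 0.5$ and nonnegativity the problem is a linear program. Each dollar spent on quality buys $1/c_q$ units of $\Delta q_a$ and returns $\frac{2\lambda q_b}{c_q(q_a+q_b)^2}$; each dollar spent on seeding agent $j$ buys $1/c_s$ units of $(S_a)_j$ and returns $v_j/c_s$.

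Next I would compare these two per-dollar returns. Seeding $j$ strictly dominates quality improvement exactly when $v_j/c_s > \frac{2\lambda q_b}{c_q(q_a+q_b)^2}$, which on rearranging is precisely $v_j > v_c^a$ as in \eqref{condition}. Optimality then follows from a standard exchange argument: if some candidate allocation has $\Delta q_a>0$ while an agent $j$ with $v_j>v_c^a$ has slack capacity $(S_a)_j + y_j(0) < 0.5$, shifting an infinitesimal amount $\varepsilon$ of budget from quality to seeding $j$ preserves feasibility and strictly increases the objective by $\varepsilon\bigl(v_j/c_s - 2\lambda q_b/(c_q(q_a+q_b)^2)\bigr)>0$, contradicting optimality. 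Symmetrically, one cannot shift budget from a sub-threshold agent ($v_j<v_c^a$) to quality without losing payoff. Hence at any optimum every dollar is allocated to the highest available per-dollar-return channel, which is exactly the threshold rule in the statement. The identical argument applied to firm $b$'s problem \eqref{firms utility function b} delivers the threshold $v_c^b$ in \eqref{condition2}.

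For the Nash equilibrium claim I would invoke the decoupling observation already recorded right after \eqref{firms utility function b}: firm $a$'s objective and constraint in \eqref{firms utility function a} do not involve $(S_b,\Delta q_b)$, and firm $b$'s in \eqref{firms utility function b} do not involve $(S_a,\Delta q_a)$. Thus firm $a$'s best response is constant in firm $b$'s action and vice versa, so any optimizer of \eqref{firms utility function a} paired with any optimizer of \eqref{firms utility function b} is mutually a best response, i.e.\ a Nash equilibrium.

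The main obstacle is bookkeeping around the capacity constraints. When $K_a$ is large enough that the total feasible above-threshold seeding capacity $c_s \sum_{j:\,v_j>v_c^a}(0.5 - y_j(0))$ is exhausted, the residual budget is forced onto quality even though the threshold alone does not say so; and when several agents share the same per-dollar return (in particular when $v_j=v_c^a$), the optimum is non-unique. Both phenomena are consistent with the theorem, which phrases the result as a threshold \emph{rule} for profitability rather than as a unique allocation, and both are handled uniformly by the exchange argument above.
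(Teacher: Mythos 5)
Your proposal is correct and follows essentially the same route as the paper: both arguments reduce to comparing the marginal utility per unit cost of seeding agent $j$ (namely $v_j/c_s$) with that of quality improvement (namely $2\lambda q_b/(c_q(q_a+q_b)^2)$), and both obtain the Nash equilibrium claim from the decoupling of the two firms' optimization problems. Your added exchange argument and the remarks on capacity saturation merely make explicit what the paper leaves implicit.
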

\begin{proof}
From equation \eqref{condition} the relative marginal utility to cost for spending budget to seed agent $j$ is $\frac{v_j}{c_s}$. Therefore, it is always more profitable to seed an agent with higher centrality. Also, the relative marginal utility to cost for spending budget on enhancing quality of product $a$ is $\frac{2 \lambda q_b}{c_q(q_a+q_b)^2}$. Therefore, for firm $a$ it is more profitable to seed agent $j$ rather than enhancing the quality of its product iff
\begin{equation*}
\frac{v_j}{c_s} > \frac{2 \lambda q_b}{c_q(q_a+q_b)^2}.
\end{equation*}
This completes the proof. Similar story holds for firm $b$. Moreover, since the best response of each firm resulting from equations \eqref{firms utility function a} and \eqref{firms utility function b} is independent of the action of the other firm, any Nash equilibrium of the game between firms is simply a pair of firms best responses.
\end{proof}

Following the above theorem, the optimal allocation of the budget for each firm is to follow a so called water-filling strategy, that is, to start seeding in the order of agents' centralities until the centrality falls below the threshold given by \eqref{condition} for firm $a$ or \eqref{condition2} for firm $b$ (in which case the firm spends the rest of the budget on improving the quality) or the firm runs out of budget. Also, the amount that agents can be seeded is up to their demand capacity, i.e. $S_a^{max} = (0.5)\mathbf{1} - y(0) > 0$ and $S_b^{max} = (0.5)\mathbf{1} + y(0) > 0$. Also, note that if the centrality of any agent is equal to the threshold defined in \eqref{condition} or \eqref{condition2},
then firms are indifferent between seeding that agent and quality improvement. Equations \eqref{condition} and \eqref{condition2} indicate that the optimal allocation depends on quality of products, i.e. $q_a$ and $q_b$, centrality distribution of agents in the network, i.e. $v$, the parameter $\lambda$ which depends on discounting factor $\delta$ and network coefficient $\alpha$, and cost of seeding and quality improvement, i.e. $c_s$ and $c_q$. We will discuss the effect of each of these factors on optimal allocation in the following subsections. For simplicity, we only discuss optimal seeding budget; optimal quality improvement budget can be found easily using the budget constraint $c_s \|S_a\|_1 + c_q \Delta q_a=K_a$. All of our analysis here is for firm $a$ and similar results can be shown for firm $b$ as well.
\subsection{Effect of Network Structure on Firms' Decisions:}

In this subsection we study the effect of network structure on the optimal allocation of the budget for seeding and quality improvement. First we define seeding capacity of a graph.
\begin{mydef}
The seeding capacity of a graph is the amount that it can be seeded in the optimal allocation when there is no budget constraint.
\end{mydef}

We first focus on two well studied graphs, i.e. star and balanced graphs, and highlight how they can reflect  important properties of the seeding budget. Before continuing further, we find the network centralities for these two graphs in the next lemma.
\begin{lem}\label{v_Lv_H}
The centrality of the agents in a balanced graph is given by $\bar{v}=\frac{2\alpha}{2\alpha-\delta(1-\alpha)}$. In a star graph, the centrality of the central agent is
\begin{equation*}
v_h=\frac{1+\frac{\delta(1-\alpha)(n-1)}{2\alpha}}{1-(\frac{\delta(1-\alpha)}{2\alpha})^2},
\end{equation*}
and the centrality of non central agents is
\begin{equation*}
v_l=\frac{1+\frac{\delta(1-\alpha)}{2\alpha(n-1)}}{1-(\frac{\delta(1-\alpha)}{2\alpha})^2}.
\end{equation*}
Moreover, for any arbitrary graph $G$, $\bar{v}\leq v_{max}\leq v_h$, where $v_{max}=\max_{i\in V}v_i$.
\end{lem}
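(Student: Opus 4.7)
The plan is to break the statement into three parts and handle each in turn. Throughout I set $\beta \triangleq \delta(1-\alpha)/(2\alpha)$ so that $\delta W^T = \beta G^T$ and the defining relation $v = (I-\delta W^T)^{-1}\mathbf{1}$ becomes the Bonacich-type recursion $v_i = 1 + \beta \sum_j G_{ji} v_j$, equivalently $v = \sum_{k \geq 0}\beta^k (G^T)^k\mathbf{1}$ (which converges since Assumption~\ref{asm1} gives $\beta < 1/2$). For the balanced graph, I will use that $G$ is doubly stochastic, so $G^T\mathbf{1}=\mathbf{1}$, every term of the series equals $\mathbf{1}$, and $v = \mathbf{1}/(1-\beta)$; substituting back the value of $\beta$ recovers the stated $\bar v$. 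For the star, I will exploit the symmetry among leaves: all $n-1$ leaves share a common value $v_l$, and the recursion collapses to the $2\times 2$ system $v_h = 1+\beta(n-1)v_l$ (each leaf contributes $G_{lh}=1$) and $v_l = 1+(\beta/(n-1))v_h$ (only $G_{hl}=1/(n-1)$ contributes), which solves to give the claimed closed-form expressions.

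The lower bound $\bar v \leq v_{max}$ drops out of the identity $\mathbf{1}^T v = n/(1-\beta) = n\bar v$ already recorded in the paper: the centralities average to $\bar v$, so their maximum is at least $\bar v$. For the upper bound my plan is to show that concentrating every row other than $i^* \triangleq \arg\max_j v_j$ entirely into column $i^*$ cannot decrease $v_{i^*}$. Let $\tilde G$ be obtained from $G$ by setting $\tilde G_{ji^*}=1$ for every $j\neq i^*$ while leaving row $i^*$ unchanged. A one-line computation using the recursion at $i^*$ in $\tilde G$ and the fact that row $i^*$ sums to $1$ gives $v_{i^*}(\tilde G) = 1 + \beta(n-1) + \beta^2 v_{i^*}(\tilde G)$, hence $v_{i^*}(\tilde G) = (1+\beta(n-1))/(1-\beta^2) = v_h$ regardless of row $i^*$.

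The work then reduces to proving monotonicity: along a linear homotopy $G(t)$, $t\in[0,1]$, that transfers weight in each row $j\neq i^*$ off non-$i^*$ columns onto column $i^*$ (linear interpolation stays row-stochastic with zero diagonal, so is admissible), the scalar $v_{i^*}(G(t))$ is nondecreasing. Differentiating $(I-\beta G(t)^T)v(t)=\mathbf{1}$ gives $\partial v_{i^*}/\partial G_{jk} = \beta v_j M_{i^* k}$ with $M\triangleq(I-\beta G^T)^{-1}$, so
\begin{equation*}
\frac{d v_{i^*}}{dt} = \beta \sum_{j\neq i^*} v_j \sum_{k\neq i^*}\frac{dG_{jk}}{dt}\bigl(M_{i^* k}-M_{i^* i^*}\bigr).
\end{equation*}
Every $dG_{jk}/dt \leq 0$ and every $v_j \geq 0$, so nonnegativity hinges on the single inequality $M_{i^* i^*} \geq M_{i^* k}$ for all $k$, which is the main obstacle. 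I will establish this by a renewal argument: interpreting $M_{ij}=\sum_k \beta^k(G^k)_{ji}$ as the $\beta$-discounted expected number of visits to $i$ by a random walk with transition matrix $G$ started at $j$, and splitting the walk at the first hitting time $\tau_i$ via the strong Markov property, one obtains $M_{ij} = E_j[\beta^{\tau_i}\mathbf{1}\{\tau_i<\infty\}]\cdot M_{ii}\leq M_{ii}$. With this, $dv_{i^*}/dt \geq 0$ throughout the homotopy, and $v_{max}(G) = v_{i^*}(G) \leq v_{i^*}(\tilde G) = v_h$ follows.
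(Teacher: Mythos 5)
Your computations of $\bar v$, $v_h$, $v_l$ and the lower bound $\bar v\leq v_{max}$ coincide with the paper's (doubly-stochastic series for the balanced graph, the $2\times2$ leaf-symmetric system for the star, and the averaging identity $\mathbf{1}^Tv=n/(1-\beta)$), and all are correct. Where you genuinely diverge is the upper bound $v_{max}\leq v_h$. The paper's route is a two-line summation argument: from $v=\mathbf{1}+\delta W^Tv$ it keeps only the single term coming from the maximizing agent $i$, getting $v_j\geq 1+\beta g_{ij}v_i$ for each $j\neq i$, sums over $j$ using row-stochasticity of row $i$ to obtain $\sum_j v_j\geq (n-1)+(1+\beta)v_i$, and then solves against the known total $\sum_j v_j=n/(1-\beta)$ to read off $v_i\leq(1+(n-1)\beta)/(1-\beta^2)=v_h$. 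Your route instead deforms $G$ along a homotopy toward the graph $\tilde G$ that concentrates every other row onto column $i^*$, verifies $v_{i^*}(\tilde G)=v_h$ exactly, and reduces monotonicity to the Green's-function inequality $M_{i^*i^*}\geq M_{i^*k}$, which you justify by the first-hitting-time decomposition $M_{ij}=E_j[\beta^{\tau_i}\mathbf{1}\{\tau_i<\infty\}]\,M_{ii}$. This is correct (the derivative formula $\partial v_{i^*}/\partial G_{jk}=\beta v_jM_{i^*k}$ checks out, the interpolants stay admissible, and $v_j\geq 1$ along the path), but it invokes considerably heavier machinery — matrix perturbation plus a renewal/maximum-principle argument — for a fact the paper obtains by elementary algebra. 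What your approach buys in exchange is structural information: it exhibits the extremal configuration explicitly (an effective star centered at $i^*$) and shows the bound is attained regardless of the contents of row $i^*$, which the paper's inequality chain does not make visible.
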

\begin{proof}
First part simply follows from the fact that $v=(I-\delta W^T)^{-1}\mathbf{1}$, where $W$ is given by \eqref{definition}, and that for any agent $i$ in a balanced graph $\sum g_{ji}=\sum g_{ij}=1$. For the star graph, noting that $v=\mathbf{1}+\delta W^T v$, we can obtain
\begin{align*}
v_h&=1+\delta(n-1)v_l,\\
v_l&=1+\frac{\delta v_h}{(n-1)},
\end{align*}
solving which we can find $v_h$ and $v_l$ as given in the lemma.

Also, for any arbitrary graph $G$, $v_{max}\geq\frac{\sum v_i}{n}=\bar{v}$. To show $v_{max}\leq v_h$, assume that the maximum centrality occurs for agent $i$. Again, using $v=\mathbf{1}+\delta W^T v$ we can obtain
\begin{equation*}
v_j\geq 1+\frac{\delta(1-\alpha)}{2\alpha} g_{ij} v_i,
\end{equation*}
using which for all $j\neq i$, we get
\begin{equation*}
\sum_{j=1}^{n}{v_j}\geq (n-1)+(1+\frac{\delta(1-\alpha)}{2\alpha})v_i.
\end{equation*}
Applying simple algebra along with the fact that $\sum v_j=\frac{2\alpha n}{2\alpha-\delta(1-\alpha)}$ leads to $v_i\leq v_h$.
\end{proof}

The next proposition provides a condition for seeding profitability of any general graph. Also, the seeding capacity of star and balanced graphs are compared and it is shown that the graph with higher seeding capacity can be any of the two, depending on the threshold value of $v_c^{a}$ in \eqref{condition}.
\begin{prop} \label{compare seeding}
If seeding capacity is nonzero for a balanced graph, it will be nonzero for any other graph too. On the other hand, if seeding capacity is zero for a star graph, it will also be zero for any other graph. Moreover, if $ v_l < v_c^a < \bar{v}$, a balanced graph has a larger seeding capacity than a star graph, and if $\bar{v}< v_c^a <  v_h$, a star graph has a larger seeding capacity than a balanced graph. For $1<v_c^a<v_l$ they have the same seeding capacity.
\end{prop}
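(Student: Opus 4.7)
The plan is to read ``seeding capacity'' through the threshold rule of Theorem~1: the optimal no-budget-constraint allocation seeds every agent $j$ with $v_j>v_c^a$ up to its demand capacity $0.5-y_j(0)>0$, so the seeding capacity equals
\[
\sum_{j:\,v_j>v_c^a}\bigl(0.5-y_j(0)\bigr),
\]
which is nonzero precisely when at least one coordinate of $v$ exceeds $v_c^a$. With this reformulation, everything reduces to comparing $v_c^a$ against the centrality spectra of the two benchmark graphs and using the sandwich inequality $\bar v\le v_{\max}\le v_h$ from Lemma~\ref{v_Lv_H}.

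For the first two qualitative claims I would argue as follows. If the balanced graph has nonzero seeding capacity, then $v_c^a<\bar v$; but by Lemma~\ref{v_Lv_H}, any graph $G$ satisfies $v_{\max}(G)\ge\bar v>v_c^a$, so the set $\{j:v_j>v_c^a\}$ is nonempty for $G$ and its seeding capacity is nonzero as well. Conversely, if the star graph has zero seeding capacity, then $v_c^a\ge v_h$; Lemma~\ref{v_Lv_H} then gives $v_{\max}(G)\le v_h\le v_c^a$ for any graph, so no agent clears the threshold and the seeding capacity of $G$ vanishes.

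For the three regimes in the second part, I would simply count the agents above threshold in each benchmark. In a balanced graph every $v_j=\bar v$, so either all $n$ agents are seeded (when $v_c^a<\bar v$) or none are (when $v_c^a>\bar v$). In a star graph, the spectrum is $\{v_h\}\cup\{v_l,\dots,v_l\}$, so either all $n$ agents are seeded (when $v_c^a<v_l$), only the hub (when $v_l<v_c^a<v_h$), or none (when $v_c^a>v_h$). Matching these cases against the three intervals:
\begin{itemize}
\item $v_l<v_c^a<\bar v$: balanced seeds all $n$ agents, star seeds only the hub, so balanced wins.
\item $\bar v<v_c^a<v_h$: balanced seeds no one, star still seeds the hub, so star wins.
\item $1<v_c^a<v_l$: both seed all $n$ agents, hence equal capacities.
\end{itemize}

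The technical content is essentially contained in Lemma~\ref{v_Lv_H}, which is already in hand, so the argument is mostly bookkeeping. The only mild subtleties I anticipate are (i) justifying that the demand caps $0.5-y_j(0)$ are strictly positive so that ``nonempty seeding set'' genuinely translates to ``nonzero seeding amount,'' and (ii) handling the boundary values of $v_c^a$ by appealing to the indifference remark made just after Theorem~1 (at equality the firm is free to place the mass on either side, so the inequalities are stated strictly). Neither is a real obstacle.
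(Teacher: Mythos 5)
Your proposal is correct and follows essentially the same route as the paper's own proof: it reduces seeding capacity to counting agents whose centrality exceeds $v_c^a$, invokes the sandwich $\bar v\le v_{\max}\le v_h$ from Lemma~\ref{v_Lv_H} for the two qualitative claims, and then compares the centrality spectra of the balanced and star graphs case by case over the three intervals. The two subtleties you flag (strict positivity of the demand caps and the boundary/indifference cases) are handled the same way in the paper, via the standing assumption $S_a^{\max}>0$ and the remark after Theorem~1.
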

\begin{proof}
If seeding capacity is nonzero for a balanced graph, then we have  $v_c^a < \bar{v}$. As a result, for any other graph we will have $v_c^a <v_{max}$, since according to Lemma~\ref{v_Lv_H} $\bar{v}\leq v_{max}$. This means that there exists at least one agent that must be seeded. On the other hand, if seeding capacity is zero for a star graph, then we must have $v_c^a > v_h$. Since we know $v_h \geq v_i$ for any agent $i$ of any arbitrary graph, therefore, $v_c^a > v_i$ and no agent can be seeded in any other graph.
For the second part of the proposition,
if $ v_l < v_c^a < \bar{v}$, then seeding capacity for the star graph will be $S^{max}_{a_i}$, where $S^{max}_{a_1} \geq S^{max}_{a_2} \geq \cdots \geq S^{max}_{a_n}$ are elements of the demand capacity vector $S^{max}_a$ and agent $i$ is the central agent. However, for the balanced graph all agents can be seeded up to their maximum demand capacities and the seeding capacity will be $\|S^{max}_a\|_1$. On the other hand, if $ \bar v < v_c^a < v_h$, still seeding for the star graph will be $S^{max}_{a_i}$, however, no agent can be seeded in the balanced graph. For $1<v_c^a<v_l$, agents in both graphs can be seeded up to $\|S^{max}_a\|_1$.
\end{proof}

Although star and balanced graphs give us useful insights for seeding capacity in general graphs, extreme seeding capacities do not necessary happen in those graphs. The next proposition provides us with a lower and an upper bounds for minimum and maximum seeding capacities.
\begin{prop} \label{max seeding}
If $ 1 < v_c^a < v_h$, the maximum seeding capacity is given by  \begin{equation*}
\|S^{*}_a\|^{max}_{1} = \sum_{i=1}^{k} S^{max}_{a_i},
\end{equation*}
where
\begin{equation} \label{k}
k = \min \{\lfloor \frac{n\delta (1-\alpha) }{(v_c^a-1)(2\alpha - \delta(1-\alpha))} \rfloor, n\}.
\end{equation}
On the other hand, the minimum seeding capacity is $S^{max}_{a_n}$ if $ 1 < v_c^a < \bar v$, and is zero if $\bar v < v_c^a < v_h$.
\end{prop}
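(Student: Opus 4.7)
The plan is to reduce the problem to counting how many agents can have centrality strictly exceeding $v_c^a$ across all graphs on $n$ nodes, since by Theorem~1 the seeding capacity of a graph equals $\sum_{j:\, v_j>v_c^a} S^{max}_{a_j}$. The key analytic tool is the identity $\sum_i v_i=\frac{2\alpha n}{2\alpha-\delta(1-\alpha)}$ already established in the paper, combined with the elementwise lower bound $v_i\geq 1$, which follows from $v=\mathbf{1}+\delta W^T v$ together with the nonnegativity of $v$ (immediate from the Neumann series for $(I-\delta W^T)^{-1}$).

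Let $m=|\{i:v_i>v_c^a\}|$. Splitting the centrality sum across the threshold gives
\begin{equation*}
\frac{2\alpha n}{2\alpha-\delta(1-\alpha)}=\sum_i v_i > m\,v_c^a+(n-m),
\end{equation*}
which, using $v_c^a>1$, rearranges to $m<\frac{n\delta(1-\alpha)}{(v_c^a-1)(2\alpha-\delta(1-\alpha))}$; since $m$ is also bounded by $n$, we obtain $m\leq k$ for $k$ as in \eqref{k}. Ordering $S^{max}_{a_1}\geq\cdots\geq S^{max}_{a_n}$, the seeding capacity is therefore at most $\sum_{i=1}^m S^{max}_{a_i}\leq\sum_{i=1}^k S^{max}_{a_i}$. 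For achievability I would exhibit a graph in which the $k$ agents with largest demand capacities are precisely those with centrality above $v_c^a$, e.g., a disjoint union of sufficiently large star components rooted at those $k$ agents with the remaining $n-k$ agents distributed as leaves, which saturates the bound.

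For the minimum statement, Lemma~\ref{v_Lv_H} gives $v_{max}\geq\bar{v}$; hence when $1<v_c^a<\bar{v}$ at least one agent must lie above the threshold and be seeded, so the seeding capacity is bounded below by $\min_i S^{max}_{a_i}=S^{max}_{a_n}$, and equality is realized by a graph in which only agent $n$ exceeds the threshold (e.g., placing $n$ at the center of a star after relabeling). In the second regime $\bar{v}<v_c^a<v_h$, the balanced graph has every $v_i=\bar{v}<v_c^a$, giving zero seeding capacity. The main obstacle I anticipate is the achievability step for the maximum: certifying that a concrete construction yields exactly $k$ agents above $v_c^a$ for the extremal $k$ of \eqref{k} requires careful tuning of edge weights in each hub component, since the averaging argument only controls a weighted sum and not individual centralities.
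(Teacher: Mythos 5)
Your proposal is correct and follows essentially the same route as the paper: bound the number $m$ of agents with $v_i>v_c^a$ using the fixed centrality sum $\sum_i v_i=\frac{2\alpha n}{2\alpha-\delta(1-\alpha)}$ together with $v_i\geq 1$, then exhibit a multi-hub graph attaining the bound, and handle the minimum via the star and balanced graphs exactly as in Proposition~\ref{compare seeding}. Your counting step is in fact a cleaner formalization than the paper's informal ``give each agent centrality $1$ and distribute the remainder'' argument, since the chain $\sum_i v_i>m\,v_c^a+(n-m)$ makes the floor bound explicit. On the one point you flag as the main obstacle --- tuning the achievability construction --- the paper's construction resolves it in a way your disjoint-union-of-stars does not quite: if hubs point back to their leaves (as row-stochasticity forces in a genuinely disjoint star component), the leaves acquire centrality strictly greater than $1$, which wastes part of the fixed centrality sum and can push the hub centralities below $v_c^a$ precisely in the tight case $k=\lfloor x\rfloor$ with $x$ close to $k$. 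The fix is to give the $n-k$ leaves no incoming edges at all (so $v_l=1$ exactly, since $v_i=1+\frac{\delta(1-\alpha)}{2\alpha}\sum_j g_{ji}v_j$) and let the $k$ hubs distribute their outgoing weight among one another; symmetry then pins each hub at $v_h=\frac{n\delta(1-\alpha)}{k(2\alpha-\delta(1-\alpha))}+1\geq v_c^a$, which is the value the paper states. The only residual edge case, which your strict inequality actually exposes more honestly than the paper does, is when $\frac{n\delta(1-\alpha)}{(v_c^a-1)(2\alpha-\delta(1-\alpha))}$ is an integer, in which case the $k$ hubs sit exactly at the threshold and are only weakly, not strictly, profitable to seed.
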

\begin{proof}
From condition \eqref{condition} the more agents with centralities above the threshold $v_c^{a}$, the more seeding budget can be allocated. Therefore, the maximum number of $k$ agents with centralities above the threshold $v_c^{a}$ must be found. Since $v_i \geq 1$ for all agents, first a centrality of $1$ is given to each agent and then the remainder of the centrality sum is distributed among maximum number of agents so that each agent receives at least $v_c^{a}-1$, making its overall centrality greater than $v_c^{a}$. It is easy to see that the number of such agents is upper bounded by $\lfloor \frac{\frac{2\alpha n}{2\alpha- \delta(1-\alpha)} - n}{v_c^a-1}\rfloor$.
This along with the fact that $1\leq k\leq n$ results in \eqref{k}.
Note that, in order to complete the proof, we should also provide an example achieving this maximum capacity. For $k=1$, the maximum seeding capacity is clearly achieved by the star graph with the seeding capacity of $S_{a_1}^{max}$. For $k\geq2$, a graph with largest seeding capacity is a $k$-star graph with $k$ central agents with the largest demand capacities and with equal centralities of
\begin{equation*}
v_h = \frac{n\delta (1-\alpha)}{k(2\alpha - \delta(1-\alpha))} + 1,
\end{equation*}
where $1 \leq k \leq n$ is given in \eqref{k}, and the remainder $n-k$ agents with the minimum centrality of $v_l=1$.
For the graph with minimum seeding capacity, similar to the proof of Proposition \ref{compare seeding}, we have minimum seeding capacity of $S^{max}_{a_n}$ in star graph if $ 1 < v_c^a < \bar v$, and zero in balanced graph if $ \bar v< v_c^a <v_h$.
\end{proof}

\begin{ex}
As a numerical example for the minimum and maximum seeding capacities, we consider a network with $n=15$ agents with demand capacity vectors of $S_a^{max}=S_b^{max}=(0.5)\mathbf{1}$, qualities of $q_a=q_b=1$, quality and seeding costs of $c_s=c_q=1$ and parameters of $\alpha = \delta = 0.5$. For this example from equations \eqref{condition} and \eqref{condition2} we have $v_c^{a}=v_c^{b}= 2.5$ and as a result, from equation \eqref{k} we get $k=3$. Therefore, a graph with the maximum seeding capacity is a $3$-star with seeding capacity of $1.5$ as illustrated in Fig.~\ref{fig:3star}. Also, since $ \bar v=\frac{4}{3} < v_c^a, v_c^b < v_h = 4.8 $, a balanced graph has the minimum seeding capacity of zero. A star graph has a seeding capacity of $0.5$ which is neither a minimum nor a maximum.
\end{ex}
\begin{figure}
\centering
\includegraphics[scale=2]{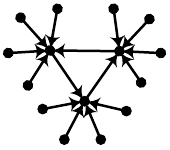}
\caption{A graph with maximum seeding capacity for Example~1}\label{fig:3star}
\end{figure}

As we saw, the structure of the graphs with minimum and maximum seeding capacity depends on the threshold value of $v_c^{a}$. However, for certain values of $v_c^{a}$ the seeding capacity will be independent of the structure of the graph, as described in the next proposition.
\begin{prop}
If $v_c^a > v_h$ no graph can be seeded. On the other hand, if $v_c^a < 1$ all graphs can be seeded equally up to agents' maximum demand capacities.
\end{prop}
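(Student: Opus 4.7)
The plan is to reduce both claims directly to the threshold rule established in the theorem (seed agent $j$ iff $v_j>v_c^a$) combined with the centrality bounds that Lemma~\ref{v_Lv_H} provides, together with one elementary observation that $v_i\geq 1$ for every agent in every graph.

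For the first claim, I would fix an arbitrary graph $G$ and let $v_{max}=\max_{i\in V}v_i$. By Lemma~\ref{v_Lv_H} we have $v_{max}\leq v_h$, so under the hypothesis $v_c^a>v_h$ every agent satisfies $v_i\leq v_{max}\leq v_h<v_c^a$. Applying the threshold rule of the theorem, no agent meets the seeding profitability condition $v_j>v_c^a$, so the optimal strategy allocates zero budget to seeding no matter which graph is considered.

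For the second claim, I need a lower bound on the centralities holding for every graph. Since $W=\frac{1-\alpha}{2\alpha}G$ has nonnegative entries and $\alpha\geq 1/2$ by Assumption~\ref{asm1}, the Neumann series expansion
\begin{equation*}
v=(I-\delta W^T)^{-1}\mathbf{1}=\sum_{k=0}^{\infty}(\delta W^T)^k\mathbf{1}
\end{equation*}
is a sum of termwise nonnegative vectors whose $k=0$ contribution is $\mathbf{1}$. Hence $v_i\geq 1$ for every agent $i$ of every graph. Under the hypothesis $v_c^a<1$ this gives $v_i\geq 1>v_c^a$ for all $i$, so by the theorem every agent is strictly more profitable to seed than to use for quality improvement. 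Consequently, the water-filling strategy described after the theorem seeds every agent until its demand capacity $S_{a_i}^{max}$ is reached, independent of the graph structure.

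There is no real obstacle here: both statements follow by matching the two extreme regimes of $v_c^a$ against the universal bounds $1\leq v_i\leq v_h$ that hold for every graph. The only point worth stating explicitly in the write-up is the elementary lower bound $v_i\geq 1$, since Lemma~\ref{v_Lv_H} was phrased only as an upper bound $v_{max}\leq v_h$ and a mean bound $\bar v\leq v_{max}$.
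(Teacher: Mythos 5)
Your proof is correct and follows essentially the same route as the paper: both claims reduce to comparing $v_c^a$ against the universal bounds $v_i\leq v_h$ (from Lemma~\ref{v_Lv_H}) and $v_i\geq 1$. Your only addition is to justify $v_i\geq 1$ via the Neumann series expansion of $(I-\delta W^T)^{-1}\mathbf{1}$, which the paper simply asserts as following ``from definition''---a small but welcome extra step of rigor.
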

\begin{proof}
The maximum possible centrality happens for the central agent of the star graph as shown in Lemma~\ref{v_Lv_H}. As a result, if $v_c^a > v_h$, then we have $v_c^a >  v_i$ for all $i$ in any graph and from condition \eqref{condition} no agent can be seeded. Also, since from definition $1 \leq v_i$ for all $i$, if $ v_c^{a} < 1$ then we have $ v_c^{a} < v_i$ for all agents in any graph, and any graph can be seeded up to agents' maximum demand capacities, given the availability of budget.
\end{proof}

\subsection{Effect of Quality of Products on Firms' Decisions:}
In this subsection we study the effect of the preset quality of each product, i.e. $q_a$ and $q_b$, on the optimal allocation of seeding and quality improvement budgets.

As it can be seen from equation \eqref{condition}, the threshold $v_c^{a}$ depends on both firm's and its rival's quality, i.e. both $q_a$ and $q_b$. In the next proposition we explain how the optimal seeding budget depends on $q_a$ and $q_b$.
\begin{prop} \label{prop quality}
Given a fixed graph, the optimal seeding budget is an increasing function of $q_a$. Furthermore, it is a decreasing function of $q_b$ if $q_b < q_a$ and an increasing function of $q_b$ if $q_a < q_b$.
\end{prop}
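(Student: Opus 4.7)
The plan is to reduce the claim to a monotonicity analysis of the threshold $v_c^a$ given by equation \eqref{condition}, since the water-filling description following Theorem~1 expresses the optimal seeding budget entirely in terms of which agents $j$ satisfy $v_j>v_c^a$. Fixing the graph fixes the centrality vector $v$, the costs $c_s,c_q$, and the parameter $\lambda$, so $q_a$ and $q_b$ enter only through $v_c^a = 2\lambda(c_s/c_q)\,q_b/(q_a+q_b)^2$. First I would compute the two partial derivatives by direct differentiation:
\begin{equation*}
\frac{\partial v_c^a}{\partial q_a}=-\frac{4\lambda c_s}{c_q}\cdot\frac{q_b}{(q_a+q_b)^3},\qquad
\frac{\partial v_c^a}{\partial q_b}=\frac{2\lambda c_s}{c_q}\cdot\frac{q_a-q_b}{(q_a+q_b)^3}.
\end{equation*}
The first is strictly negative, and the second has the same sign as $q_a-q_b$. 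Hence $v_c^a$ is strictly decreasing in $q_a$, strictly increasing in $q_b$ when $q_b<q_a$, and strictly decreasing in $q_b$ when $q_b>q_a$.

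Next I would translate monotonicity of the threshold into monotonicity of the optimal seeding budget. By the water-filling strategy, firm $a$ orders agents in decreasing order of centrality and seeds them (each up to its demand capacity $S_{a,i}^{\max}$) as long as $v_i>v_c^a$ and budget remains. Lowering $v_c^a$ therefore weakly enlarges the set of eligible agents and can only increase the total $\ell_1$ mass assigned to seeding, while raising $v_c^a$ weakly shrinks this set. Combining with the sign computation: a decrease in $v_c^a$ corresponds to an increase in $q_a$, to a decrease in $q_b$ when $q_b<q_a$, or to an increase in $q_b$ when $q_b>q_a$, yielding exactly the three claimed monotonicity statements.

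The only real obstacle is that the monotonicity is strict only on a regime where the budget and capacity constraints are not already saturated. When $K_a$ is large enough that all agents with $v_i>v_c^a$ are seeded to full capacity, further decreases in $v_c^a$ may not strictly increase the seeding allocation until a new agent crosses the threshold, and analogous saturation occurs at the upper end when $v_c^a$ exceeds $\max_i v_i$. I would therefore state the proposition as weak monotonicity in general and strict monotonicity whenever the water-filling is neither budget-saturated nor centrality-saturated, and verify (by the same threshold comparison) that the boundary regimes are consistent. Once these regimes are accounted for, the proposition follows immediately from the sign table above and the water-filling characterization.
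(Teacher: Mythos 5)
Your proposal is correct and follows essentially the same route as the paper: compute the partial derivatives of $v_c^a$ in $q_a$ and $q_b$, read off their signs, and translate the resulting monotonicity of the threshold into monotonicity of the seeding budget via the water-filling rule. Your added remark distinguishing weak from strict monotonicity under budget or capacity saturation is a slightly more careful reading than the paper's ``less (or same) agents'' phrasing, but it does not change the argument.
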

\begin{proof}
The optimal seeding budget is a decreasing function of the threshold value $v_c^{a}$ because if the threshold value $v_c^{a}$ increases then there will be less (or same) agents with centrality above the threshold which can be seeded in the optimal allocation. Therefore, the optimal seeding budget decreases. Also, $$ \frac{\partial v_c^{a}}{\partial q_a} = k \times \frac{-q_b}{(q_a+q_b)^3} <0,$$ for some constant $k > 0$. Therefore, the threshold value of $v_c^{a}$ is a decreasing function of $q_a$. This implies the first part of proposition. For the second part we have
$$ \frac{\partial v_c^{a}}{\partial q_b} = k \times \frac{(q_a-q_b)}{(q_a+q_b)^3},$$ for some constant $k > 0$. Therefore, the threshold value of $v_c^{a}$ is an increasing function of the quality of product $b$, when $q_b < q_a$ and a decreasing function of the quality of product $b$, when $q_b > q_a$. This completes the proof.
\end{proof}

Proposition \ref{prop quality} implies that higher quality in a firm's product results in a higher seeding budget in the optimal allocation. This can be due to the diminish return of quality: when quality is higher there is less need for quality improvement and it would be more profitable to spend on seeding. Furthermore, when $q_b < q_a$, the higher the quality of the rival firm's product, the lower the seeding budget of firm $a$, i.e., if $q_a\geq q_b'\geq q_b$ then, $\|S_a^*(q_b')\|_1\leq\|S_a^*(q_b)\|_1$. On the other hand, when competing with a firm whose product has a higher quality, i.e. $q_b>q_a$, the higher the quality of the rival firm's product, the higher firm $a$ should spend on seeding. In other words, if $q_a\leq q_b\leq q_b'$ then, $\|S_a^*(q_b)\|_1\leq\|S_a^*(q_b')\|_1$.

Combining these two results, we can see that given a fixed value of $q_a$, the seeding budget of firm $a$ is increasing with the difference $|q_a-q_b|$. The seeding budget attains its minimum when $q_b=q_a$, implying that the firm should allocate more budget to quality improvement to distance itself from the rival firm. However, as the gap between qualities widens, competition in qualities becomes less effective and firms spend more budget on seeding.
Proposition \ref{prop quality} demonstrates the effect of each quality on the optimal seeding budget when the other quality is fixed. In the next proposition, we compare the seeding budget of two firms in the optimal allocation of both firms.
\begin{prop}
Given an equal budget, the firm with higher quality also has higher seeding budget, i.e. if $q_a\leq q_b$, then $\|S_a^*\|_1\leq\|S_b^*\|_1$.
\end{prop}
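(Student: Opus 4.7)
The plan is to exploit the monotonicity of the seeding threshold in the qualities. First I would simply compute
\[
v_c^a - v_c^b \;=\; \frac{2\lambda c_s (q_b-q_a)}{c_q (q_a+q_b)^2} \;\geq\; 0
\]
whenever $q_a\leq q_b$, so firm $a$'s threshold dominates firm $b$'s. Hence the set of agents that is profitable to seed for firm $b$ (those with $v_j>v_c^b$) is a (weak) superset of the corresponding set for firm $a$, and the two firms rank candidate agents in the same order (by decreasing centrality).

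Next I would invoke the water-filling characterization from Theorem~1: each firm seeds its agents in decreasing order of centrality, filling each to its demand capacity, and stops at the first of two events, namely the centrality of the next candidate falls below the firm's threshold, or the budget is exhausted. Then I would carry out a short case analysis on whether firm $a$'s budget binds. If firm $a$ exhausts the full $K/c_s$ on seeding, then firm $b$ has the same budget and a (weakly) larger pool of profitable targets, so it also exhausts its budget on seeding and $\|S_b^*\|_1=K/c_s=\|S_a^*\|_1$. Otherwise firm $a$ has filled its entire eligible set to capacity and invested the residual in quality; firm $b$ can at least replicate firm $a$'s seeding vector on the common agents, and then profitably extend seeding to the additional tier of agents with centralities in $(v_c^b,v_c^a]$. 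In either case $\|S_a^*\|_1\leq \|S_b^*\|_1$.

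The main obstacle I expect is the interplay between the two firms' per-agent capacities $(S_a^{\max})_j=\tfrac{1}{2}-y_j(0)$ and $(S_b^{\max})_j=\tfrac{1}{2}+y_j(0)$. In the symmetric case $y(0)=0$ both capacity vectors coincide and the replication step above goes through immediately. For general $y(0)$ one has to rule out the scenario where the high-centrality agents happen to have small $S_b^{\max}$ (because they initially leaned toward product $a$), which could in principle cap $\|S_b^*\|_1$ below $\|S_a^*\|_1$; handling this cleanly would require either the natural symmetric-initial-preference assumption or a careful exchange argument that keeps track of which direction $y(0)$ shifts the demand capacities on the seeded agents.
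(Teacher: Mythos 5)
Your core argument is exactly the paper's: from \eqref{condition} and \eqref{condition2}, $q_a\leq q_b$ gives $v_c^b\leq v_c^a$, so firm $b$'s profitable-to-seed set contains firm $a$'s, and the water-filling characterization then yields $\|S_a^*\|_1\leq\|S_b^*\|_1$. The paper's proof is essentially just these two sentences; your budget-binding case analysis is a correct and more explicit rendering of the same idea.

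The obstacle you flag at the end is not a defect of your write-up but a genuine gap in the proposition as stated, which the paper's proof silently ignores. Since $S_a^{\max}=(0.5)\mathbf{1}-y(0)$ and $S_b^{\max}=(0.5)\mathbf{1}+y(0)$, the two firms face different per-agent caps whenever $y(0)\neq 0$, and your worry materializes into a counterexample: take a single agent $j$ with $v_j>v_c^a\geq v_c^b$, all other centralities below $v_c^b$, and $y_j(0)=-0.4$, so $(S_a^{\max})_j=0.9$ while $(S_b^{\max})_j=0.1$; with $c_s=1$ and common budget $K=0.5$, firm $a$ seeds $0.5$ but firm $b$ can seed only $0.1$, violating the claimed inequality. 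So the replication step in your second case really does require $S_a^{\max}=S_b^{\max}$ (i.e., $y(0)=0$, as in the paper's numerical example) or some other assumption controlling how $y(0)$ is distributed over the high-centrality agents; no ``careful exchange argument'' can rescue the statement in full generality. You were right to isolate this as the one step that does not go through.
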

\begin{proof}
From equations \eqref{condition} and \eqref{condition2} it can be easily seen that if $q_a\leq q_b$, then $v_c^{b}\leq v_c^{a}$. As a result, more agents satisfy the condition \eqref{condition2} for firm $b$ compared to firm $a$ and therefore, $\|S_a^*\|_1\leq\|S_b^*\|_1$.
\end{proof}

This result is due to diminish return of quality which means if a firm already has a high quality it would profit less by spending on quality improvement and it would be better for the firm to invest on seeding.

\subsection{Effect of Model Parameters on Firms' Decisions:}

In this subsection we study the effect of parameters of the model on the optimal allocation. There are two parameters: network coefficient $\alpha$, and discounting factor $\delta$, that affect the optimal strategy.
\begin{prop} \label{coef}
The optimal seeding budget is an increasing function of $\alpha$ and a decreasing function of $\delta$.
\end{prop}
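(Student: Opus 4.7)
The plan is to express both the threshold and the centralities as functions of a single compound parameter $\beta \triangleq \delta(1-\alpha)/(2\alpha)$, in which the centralities become particularly clean. Substituting into \eqref{lambda} and \eqref{condition} yields $\lambda = n\beta/[2(1-\delta)(1-\beta)]$ and
\begin{equation*}
v_c^a \;=\; \frac{nC\beta}{(1-\delta)(1-\beta)}, \qquad C \;\triangleq\; \frac{c_s}{c_q}\cdot\frac{q_b}{(q_a+q_b)^2},
\end{equation*}
while the centrality vector becomes $v(\beta) = (I-\beta G^T)^{-1}\mathbf{1} = \sum_{k\geq 0}\beta^k (G^T)^k\mathbf{1}$, a function of $\beta$ alone. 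Multiplying the seeding condition $v_j > v_c^a$ by the positive factor $(1-\beta)/\beta$ (positive because Assumption~\ref{asm1} forces $0<\beta<1$), it reduces to
\begin{equation*}
F_j(\beta) \;\triangleq\; \frac{(1-\beta)\,v_j(\beta)}{\beta} \;>\; \frac{nC}{1-\delta},
\end{equation*}
in which the left-hand side depends only on $\beta$ and the right-hand side only on $\delta$.

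I would then track how each side moves. Elementary calculus on $\beta(\alpha,\delta)$ gives $\partial\beta/\partial\alpha = -\delta/(2\alpha^2)<0$ and $\partial\beta/\partial\delta = (1-\alpha)/(2\alpha)>0$, while $nC/(1-\delta)$ is constant in $\alpha$ and strictly increasing in $\delta$. The technical core is then the following lemma: $F_j(\beta)$ is strictly decreasing on $(0,1)$ for every agent $j$. Differentiating $F_j = (1-\beta)v_j/\beta$ reduces this to the pointwise inequality $\beta(1-\beta)v_j'(\beta) < v_j(\beta)$; expanding via the power series and setting $a_k^{(j)} = [(G^T)^k\mathbf{1}]_j \geq 0$ with $a_0^{(j)}=1$, a direct rearrangement shows this is equivalent to
\begin{equation*}
1 + \sum_{k\geq 2}(k-1)\bigl(a_{k-1}^{(j)} - a_k^{(j)}\bigr)\beta^k > 0.
\end{equation*}
I would prove this inequality by combining the recurrence $a_{k+1}^{(j)} = \sum_i g_{ij} a_k^{(i)}$ with the global conservation $\sum_j a_k^{(j)} = n$ (inherited from $G\mathbf{1}=\mathbf{1}$), which together bound the oscillations of $a_k^{(j)}$ around its long-run value, closed off with an Abel/telescoping argument on the alternating-sign tail.

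With the lemma in hand, the two monotonicity claims are immediate. If $\alpha$ increases then $\beta$ decreases, hence $F_j(\beta)$ increases for every $j$ while the right-hand side $nC/(1-\delta)$ is unchanged, so the set $\{j:v_j>v_c^a\}$ can only grow; since the optimal water-filling strategy of Theorem~1 saturates exactly this set up to demand capacities before spending on quality, $\|S_a^*\|_1$ is nondecreasing in $\alpha$. If $\delta$ increases, $\beta$ increases so each $F_j(\beta)$ decreases, and simultaneously $nC/(1-\delta)$ strictly grows, so the set strictly shrinks and $\|S_a^*\|_1$ is nonincreasing in $\delta$. The main obstacle is precisely the $F_j$-monotonicity lemma: unlike Proposition~\ref{prop quality}, where only $v_c^a$ depends on the perturbed parameter and the one-line threshold argument is conclusive, here both $v_j$ and $v_c^a$ move in the same direction with $\alpha$ and $\delta$, and what makes the proposition true is the subtler fact that the centralities move strictly slower than the threshold—this is exactly the content of $F_j(\beta)$ being strictly decreasing.
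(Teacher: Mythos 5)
Your reduction is sound and, to your credit, it exposes exactly what a naive threshold argument misses: the centralities $v_j$ themselves depend on $\alpha$ and $\delta$ through $\beta=\delta(1-\alpha)/(2\alpha)$, so showing that $v_c^a$ moves in the right direction is not enough — one must show that the $v_j$ move strictly slower. Your computations check out: $\lambda=n\beta/[2(1-\delta)(1-\beta)]$, the rewriting of $v_j>v_c^a$ as $F_j(\beta)>nC/(1-\delta)$, and the identity $v_j-\beta(1-\beta)v_j'=1+\sum_{k\geq 2}(k-1)(a_{k-1}^{(j)}-a_k^{(j)})\beta^k$ are all correct. (For comparison, the paper's own proof is the one-line version: it tracks only $v_c^a$ through $\lambda$ and implicitly holds $v$ fixed, which your formulation correctly shows is unjustified.)

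The gap is that everything rests on the lemma that $F_j(\beta)$ is strictly decreasing for every agent $j$, and you never prove it — the closing sentence about ``bounding the oscillations'' and ``an Abel/telescoping argument'' is a hope, not an argument. Worse, the lemma is false. Consider the directed graph in which agent $1$ listens only to agent $2$ (i.e.\ $g_{12}=1$) and each agent $i\geq 2$ listens only to agent $1$ (i.e.\ $g_{i1}=1$); this $G$ is row stochastic with zero diagonal. For agent $2$ one finds $a_0^{(2)}=a_1^{(2)}=1$ and $a_k^{(2)}$ alternating between $n-1$ (even $k\geq 2$) and $1$ (odd $k\geq 3$), so your series sums in closed form to
\begin{equation*}
v_2-\beta(1-\beta)v_2' \;=\; 1-\frac{(n-2)\beta^2}{(1+\beta)^2},
\end{equation*}
which is negative — i.e.\ $F_2'(\beta)>0$ — whenever $n>2+(1+1/\beta)^2$; for instance $n=15$ with $\beta=0.4$ (attainable with $\alpha=1/2$, $\delta=0.8$, consistent with Assumption~\ref{asm1}). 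This does not merely break your proof: since varying $\alpha$ leaves $nC/(1-\delta)$ fixed, choosing $c_s/c_q$ and the qualities so that the threshold sits just below $F_2(\beta)$ makes agent $2$ drop out of the seeding set as $\alpha$ increases, while agents $3,\dots,n$ (for which $F_j=(1-\beta)/\beta$ stays far below the threshold) and agent $1$ do not change status — so the monotonicity-in-$\alpha$ claim itself fails on this instance. A correct argument therefore needs extra hypotheses, e.g.\ restricting to the regime where the column sums $a_k^{(j)}$ are nonincreasing in $k$ (where your inequality holds term by term) or where $\beta$ is small enough relative to $n$.
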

\begin{proof}
As we saw earlier, the optimal seeding budget is a decreasing function of the threshold value $v_c^{a}$. Also,
since $\frac{1}{2} \leq \alpha \leq 1$ and $0 \leq \delta < 1$, it can be easily seen that numerator of $\lambda$ in equation \eqref{lambda} is decreasing in $\alpha$ and increasing in $\delta$ whereas denominator of the ratio is increasing in $\alpha$ and decreasing in $\delta$. Therefore, the threshold value $v_c^{a}$ is a decreasing function of $\alpha$ and an increasing function of $\delta$. This completes the proof.
\end{proof}

Preposition \ref{coef} implies that for a fixed graph and fixed qualities, if $\alpha$ is higher then seeding budget will be higher and quality improvement budget will be lower. This is because higher $\alpha$ means a lower coefficient for the graph, i.e. $(1-\alpha)$, which in turn implies that the effect of the graph and update dynamics is lower. Therefore, there is no point for firms to improve the quality and take advantage of the network structure to spread the product.
Moreover, if discounting factor $\delta$ is lower then seeding budget will be higher and quality improvement budget will be lower. This is because a low discounting factor indicates a short time horizon in utility of firms, therefore, it is more profitable to invest on seeding which is more effective in short term.

\subsection{Effect of Seeding and Quality Cost on Firms' Decisions:}
In this subsection we study the effect of the seeding and quality cost, i.e. $c_s$ and $c_q$, on the optimal allocation of seeding and quality improvement budgets.
\begin{prop} \label{costs}
The optimal seeding budget is a decreasing function of seeding cost $c_s$ and an increasing function of quality improvement cost $c_q$.
\end{prop}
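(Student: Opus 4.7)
The plan is to reduce the statement, just as in the proof of Proposition~\ref{coef}, to the monotonicity of the threshold $v_c^a$ defined in \eqref{condition}, and then read off the dependence of $v_c^a$ on $c_s$ and $c_q$ directly from \eqref{condition}. Concretely, I would first invoke the fact (already used in the proofs of Propositions~\ref{prop quality} and \ref{coef}) that the water-filling characterization of the optimal allocation implies that the optimal seeding budget $\|S_a^*\|_1$ is a non-increasing function of $v_c^a$: raising the threshold can only shrink the set $\{j : v_j > v_c^a\}$ of agents eligible to be seeded, while lowering it can only enlarge it, and in either case the total $\ell_1$-mass placed on seeding (up to the demand capacities $S_a^{\max}$ and subject to the budget $K_a$) moves monotonically.

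Next, I would differentiate \eqref{condition} with respect to the two cost parameters. Writing
\begin{equation*}
v_c^a \;=\; 2\lambda\,\frac{q_b}{(q_a+q_b)^2}\,\frac{c_s}{c_q},
\end{equation*}
and noting that $\lambda$, $q_a$, $q_b$ are all independent of $c_s$ and $c_q$, one has $\partial v_c^a/\partial c_s > 0$ and $\partial v_c^a/\partial c_q < 0$. Combining with the monotonicity of $\|S_a^*\|_1$ in $v_c^a$ established in the previous paragraph, the seeding budget is a (weakly) decreasing function of $c_s$ and a (weakly) increasing function of $c_q$, which is exactly the claim. The analogous argument, applied to $v_c^b$ from \eqref{condition2}, gives the corresponding statement for firm $b$.

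I do not expect any real obstacle here: the argument is structurally identical to Proposition~\ref{coef}, with $c_s/c_q$ playing the role that $\lambda$ played there. The only subtlety worth flagging is that $\|S_a^*\|_1$ need not be strictly monotone in $v_c^a$, since at fixed graph and fixed budget it is piecewise constant in the threshold (changing value only when $v_c^a$ crosses some $v_j$ or when the budget constraint becomes active). This is why the conclusion is phrased as weak monotonicity, and it causes no difficulty in the argument above.
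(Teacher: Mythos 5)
Your argument is correct and matches the paper's proof essentially verbatim: both reduce the claim to the monotonicity of the optimal seeding budget in the threshold $v_c^a$ and then compute the signs of $\partial v_c^a/\partial c_s$ and $\partial v_c^a/\partial c_q$ from \eqref{condition}. Your remark about weak (piecewise-constant) monotonicity is a fair refinement that the paper glosses over, but it does not change the substance.
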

\begin{proof}
The optimal seeding budget is a decreasing function of the threshold value $v_c^{a}$.
Also, we have
$$ \frac{\partial v_c^{a}}{\partial c_s} = k_1>0, \qquad \qquad  \frac{\partial v_c^{a}}{\partial c_q} = \frac{-k_2}{c_q^2} < 0, $$ for some constants $k_1, k_2 > 0$. Therefore, the threshold value of $v_c^{a}$ is an increasing function of the seeding cost $c_s$ and a decreasing function of the quality improvement cost $c_q$ and the result follows.
\end{proof}

Preposition \ref{costs} implies that for a fixed graph, fixed qualities and fixed parameters, if seeding cost $c_s$ is higher then the seeding budget will be lower and the quality improvement budget will be higher.
On the other hand, if quality improvement cost $c_q$ is higher then seeding budget will be higher and quality improvement budget will be lower.

\section{Conclusion} \label{sec5}

We proposed and studied a strategic model of marketing and product consumption in social networks. Two firms providing products with preset qualities, compete for maximizing the consumption of their products in a social network. Agents are myopic yet utility maximizing, given the quality of the products and actions of their neighbors. This myopic best response results in a local, linear update dynamics for the consumptions of the agents. At some point in time, firms spend a limited budget to marginally improve the quality of their products and to give free offers to a set of agents in the network in order to promote their products. We derived a simple threshold rule for the optimal allocation of the budget and described the resulting Nash equilibrium. We showed that the optimal allocation of the budget depends on the entire centrality distribution of the graph, quality of products, costs of seeding and quality improvement, discounting factor and network coefficient.
In particular, we showed that a graph with a higher number of agents with centralities above a certain threshold, has a higher seeding budget in the optimal allocation. Furthermore, if seeding budget is nonzero for a balanced graph, it will be nonzero for any other graph too, and if seeding budget is zero for a star graph, it will also be zero for any other graph. We also showed that firms allocate more budget to quality improvement when their qualities are close, in order to distance themselves from the rival firm. However, as the gap between qualities widens, competition in qualities becomes less effective and firms spend more budget on seeding.

\bibliographystyle{ieeetr}
\bibliography{CDC2014_REF}
\end{document}